\documentclass[aps,10.4pt,twocolumn, twoside]{revtex4}
\usepackage{amsthm}
\usepackage{amsmath,latexsym,amssymb,verbatim,enumerate,graphicx,bbm}
\usepackage{color}

\newtheorem{definition}{Definition} 
\newtheorem{prop}[definition]{Proposition}
\newtheorem{lemma}[definition]{Lemma}

\newtheorem*{rep@theorem}{\rep@title}
\newcommand{\newreptheorem}[2]{%
\newenvironment{rep#1}[1]{%
 \def\rep@title{#2 \ref{##1} (restatement)}%
 \begin{rep@theorem}}%
 {\end{rep@theorem}}}
\makeatother

\newreptheorem{thm}{Theorem}
\newreptheorem{lem}{Lemma}

\def\ba#1\ea{\begin{align}#1\end{align}}
\def\ban#1\ean{\begin{align*}#1\end{align*}}


\newcommand{\be}{\begin{equation}}
\newcommand{\ee}{\end{equation}}

\newcommand{\md}{\mathrm{d}}
\newcommand{\me}{\mathrm{e}}

\def\benum{\begin{enumerate}}
\def\eenum{\end{enumerate}}

\def\squareforqed{\hbox{\rlap{$\sqcap$}$\sqcup$}}
\def\qed{\ifmmode\squareforqed\else{\unskip\nobreak\hfil
\penalty50\hskip1em\null\nobreak\hfil\squareforqed
\parfillskip=0pt\finalhyphendemerits=0\endgraf}\fi}
\def\endenv{\ifmmode\;\else{\unskip\nobreak\hfil
\penalty50\hskip1em\null\nobreak\hfil\;
\parfillskip=0pt\finalhyphendemerits=0\endgraf}\fi}


\newcommand{\bra}[1]{\langle #1|}
\newcommand{\ket}[1]{|#1\rangle}

\newcommand{\tr}{\text{tr}}

\newcommand{\<}{\langle}
\renewcommand{\>}{\rangle}

\def\be{\begin{equation}}
\def\ee{\end{equation}}
\def\ben{\begin{eqnarray}}
\def\een{\end{eqnarray}}

\def\bei{\begin{itemize}}
\def\eei{\end{itemize}}


\mathchardef\ordinarycolon\mathcode`\:
\mathcode`\:=\string"8000
\def\vcentcolon{\mathrel{\mathop\ordinarycolon}}
\begingroup \catcode`\:=\active
  \lowercase{\endgroup
  \let :\vcentcolon
  }

\newcommand{\nc}{\newcommand}
 \nc{\proj}[1]{|#1\rangle\!\langle #1 |} 
\nc{\avg}[1]{\langle#1\rangle}

\nc{\conv}{\operatorname{conv}}
\nc{\smfrac}[2]{\mbox{$\frac{#1}{#2}$}} \nc{\Tr}{\operatorname{Tr}}
\nc{\ox}{\otimes} \nc{\dg}{\dagger} \nc{\dn}{\downarrow}
\nc{\lmax}{\lambda_{\text{max}}}
\nc{\lmin}{\lambda_{\text{min}}}

\nc{\csupp}{{\operatorname{csupp}}}
\nc{\qsupp}{{\operatorname{qsupp}}} \nc{\var}{\operatorname{var}}
\nc{\rar}{\rightarrow} \nc{\lrar}{\longrightarrow}
\nc{\poly}{\operatorname{poly}}
\nc{\polylog}{\operatorname{polylog}} \nc{\Lip}{\operatorname{Lip}}
\nc{\Om}{\Omega}
\nc{\wt}[1]{\widetilde{#1}}

\def\>{\rangle}
\def\<{\langle}

\nc{\glneq}{{\raisebox{0.6ex}{$>$}  \hspace*{-1.8ex} \raisebox{-0.6ex}{$<$}}}
\nc{\gleq}{{\raisebox{0.6ex}{$\geq$}\hspace*{-1.8ex} \raisebox{-0.6ex}{$\leq$}}}


\nc{\vholder}[1]{\rule{0pt}{#1}}
\nc{\wh}[1]{\widehat{#1}}
\nc{\h}[1]{\widehat{#1}}

\nc{\ob}[1]{#1}

\def\beq{\begin {equation}}
\def\eeq{\end {equation}}

\def\be{\begin{equation}}
\def\ee{\end{equation}}

\nc{\eq}[1]{(\ref{eq:#1})} 
\nc{\eqs}[2]{\eq{#1} and \eq{#2}}

\nc{\eqn}[1]{Eq.~(\ref{eqn:#1})}
\nc{\eqns}[2]{Eqs.~(\ref{eqn:#1}) and (\ref{eqn:#2})}

\nc{\region}{\cS\cW}

\begin{document}

\title{{\Large Entanglement area law from specific heat capacity}}

\author{Fernando G.S.L. Brand\~ao}
\email{f.brandao@ucl.ac.uk}
\affiliation{Department of Computer Science, University College London}

\author{Marcus Cramer}
 \email{marcus.cramer@uni-ulm.de}
\affiliation{Institut fÃ\"ur Theoretische Physik, Universit\"at Ulm, D-89069 Ulm, Germany}


\begin{abstract}

We study the scaling of entanglement in low-energy states of quantum many-body models on lattices of arbitrary dimensions. We allow for unbounded Hamiltonians such that  systems with bosonic degrees of freedom are included. We show that if at low enough temperatures the specific heat capacity of the model decays exponentially with inverse temperature, the entanglement in every low-energy state satisfies an area law (with a logarithmic correction).
This behaviour of the heat capacity is typically observed in gapped systems. Assuming merely that the low-temperature specific heat decays polynomially with temperature, we find a subvolume scaling of entanglement. Our results give experimentally verifiable conditions for area laws, show that they are a generic property of low-energy states of matter, and, to the best of our knowledge, constitute the first proof of an area law for unbounded Hamiltonians beyond those that are integrable.





\end{abstract}

\maketitle

\parskip .75ex


The amount of entanglement in low-energy states of quantum many-body models has been the subject of intense examination. The problem was originally studied in relation to the Bekenstein entropy formula for blackholes \cite{Bek71, Haw74, BKLS86} and more recently also in the context of condensed-matter physics and quantum information theory \cite{ECP10, AEPW02, VLRK03, CC04, PEDC05, CEP07, Wol06}. The behaviour of entanglement in physically relevant states is an interesting topic not only due to the resource character of entanglement in quantum information \cite{HHHH08, BBPS96}, but also because it can be used to elucidate aspects of the physics of the system \cite{ECP10, Wol06}. Another motivation comes from the observation that systems with a large amount of entanglement are usually hard to simulate classically. As a consequence it is useful to identify when there is only limited entanglement in the system. Indeed it turns out that in many circumstances a small amount of entanglement leads to good ways to simulate the physics of the model numerically \cite{Has07, VC06, FNW92, Whi92}. 

Given a bipartite pure state $\ket{\psi}_{AB}$, the entanglement entropy of $A$ with $B$ is given by \cite{HHHH08, BBPS96}
\begin{equation}
E(\ket{\psi}_{AB}) := S(\rho_A) = - \tr(\rho_A \log \rho_A),
\end{equation}
with $\rho_A=\text{tr}_B(\rho)$ the reduced density matrix of $\ket{\psi}_{AB}$ on the region $A$ and $S(\rho_A)$ its von Neumann entropy. 

Starting with \cite{Bek71, Haw74, BKLS86} and later also with \cite{AEPW02, PEDC05, CEP07, Wol06} for quantum quasi-free systems and  \cite{VLRK03, CC04} for integrable quantum spin systems, a large body of work appeared indicating that low-energy states of local models satisfy an \textit{area law} \cite{ECP10}, i.e. the entanglement of a contiguous region with its complementary region is proportional to its boundary, in contrast to its volume as is the typical behaviour for a generic quantum state \cite{HLW06}. 

The problem is particularly well understood for groundstates of one-dimensional bounded Hamiltonians: A seminal result of Hastings \cite{Has07} (see also \cite{AKLV12}) gives an area law for the groundstate of every 1D gapped model. Recently this result was generalized to an area law for every one-dimensional state that has a finite correlation length \cite{BH12b, BH12a}. In contrast, there are groundstates of gapless 1D models (or states with diverging correlation length) with a volume scaling of entanglement \cite{Irani10, GH09}.

Systems in dimension larger than one are not nearly as well understood and, apart from solvable cases~\cite{AEPW02,ECP10,PEDC05,CEP07}, almost nothing is known for unbounded Hamiltonians as those of bosonic systems. Neither states with a finite correlation length nor, more particularly, groundstates of gapped models are known to universally obey an area law. It is thus interesting to find further conditions under which an area law can be proven, which is the approach taken in \cite{Has07c, Mas09, AMV13, Cho14, SM14}. In this paper we follow this direction and link area laws to another important property of physical systems, different from spectral gap and correlation length. Namely, we connect it to the specific heat capacity of the model at low temperatures. We show that whenever the specific heat decays with the temperature fast enough, the entanglement of \textit{every} low-energy state of the model is significantly limited. 

We focus on translationally-invariant nearest-neighbour Hamiltonians on a $d$-dimensional lattice $\Lambda := \{1, \ldots, n \}^d$ and consider non-translational local Hamiltonians acting beyond nearest-neighbours in the Appendix.
Separating on-site terms and terms coupling nearest-neighbours, the Hamiltonian reads
\begin{equation}
\label{ham}
H=\sum_{i\in\Lambda}H_{i}+\sum_{i\in\Lambda} H_{B(i)}
\end{equation}
where each $H_{i}$ acts only on site $i$ and each $H_{B(i)}$ only on $B(i)=\{j\in\Lambda\,|\,|i-j|\le1\}$. 
Without loss of generality we assume that the ground state energy is zero.
We will connect the scaling of entanglement in the ground state to the specific heat of the thermal state. To this end,
define the thermal state at temperature $T$ as $\rho_T := e^{-H/T}/Z_{T}$, where $Z_T := \tr(e^{-H/T})$ is the partition function (we set Boltzmann's constant to $1$). The energy and entropy densities are given by $u(T) := \tr(H \rho_T)/n^d$ and $s(T) := S(\rho_T)/n^d$, respectively. 
Finally, we define the specific heat capacity at temperature $T$ as
\begin{eqnarray}
c(T) := \dot u(T) =  \frac{1}{n^d T^2} \text{Cov}_{\!\rho_T}(H,H),
\end{eqnarray}
where we used the notation $\dot u(T)=\frac{\md u}{\md T}(T)$ as we will do in the following for all functions depending on temperature
and $\text{Cov}_{\!\rho}(A,B)=\langle (A-\langle A\rangle_{\!\rho})^\dagger(B-\langle B\rangle_{\!\rho})\rangle_{\!\rho}$ denotes the covariance of operators $A$ and $B$ in $\rho$.

We will need to be able to put bounds on the $H_{B(i)}$.
If these are bounded operators, we simply set $h=\|H_{B(i)}\|$. As we are allowing for infinite-dimensional Hilbert spaces, this is a bit more tricky for unbounded operators. The quantity corresponding to $h$ will turn out to be state-dependent and requires some notation. To introduce it, we consider an example first.
The Bose--Hubbard model 
\begin{equation}
H=-J\sum_{|i-j|=1}b_{i}^\dagger b_{j}+U\sum_{i}n_{i}(n_{i}-1)-\mu\sum_{i}n_{i},
\end{equation}
may be written as in Eq.~\eqref{ham} with $H_i=Un_{i}(n_{i}-1)-\mu n_{i}$ and 
\begin{equation}
\begin{split}
H_{B(i)}&=-Jb_{i}^\dagger \sum_{j:|i-j|=1}b_{j}.
\end{split}
\end{equation}
Let us consider a cubic region $R=\{1,\dots,l\}^d$ and let $i\in R$.
We may bi-partition $B(i)$ in the following way. Write $A=B(i)\cap R$ and $B=B(i)\cap (\Lambda\backslash R)$.
Then $A\subset R$ and $B\subset \Lambda\backslash R$.
If $i$ is in the ``interior'' $R^\circ=\{2,\dots,l-1\}^d$ of $R$ then $B$ is empty and $H_{B(i)}$ acts only on $R$. If, however, 
$i$ is in the ``boundary'' $\partial R=R\backslash R^\circ$ of $R$ then it also acts on sites outside of $R$.
E.g., if $i$ is as in Fig.~\ref{fig1} then $A=\{i,i-e_1,i-e_2,i+e_2\}$ and $B=\{i+e_1\}$ with $e_\delta$ the euclidean unit vectors.
We may then write
\begin{equation}
\label{blah}
\begin{split}
H_{B(i)}=\sum_{k=1}^Kh^{(k)}_{A}\otimes h^{(k)}_B,
\end{split}
\end{equation}
with $K=2$ and $h^{(1)}_{A}=-Jb_{i}^\dagger$, $h^{(1)}_B=b_{i+e_1}$, $h^{(2)}_B=\mathbbm{1}_B$, and $h^{(2)}_{A}$ collecting the remaining terms which only act on $A$. In this way, we may write $H_{B(i)}$ as in Eq.~\eqref{blah} for any $i\in\partial R$ and this is of course also possible for more general Hamiltonians than the Bose--Hubbard model. For a given translationally-invariant state $\rho$ and given Hamiltonian $H$ as in Eq.~\eqref{ham} with $H_{B(i)}$, $i\in\partial R$, as in Eq.~\eqref{blah} we may thus define
\begin{equation}
\label{op_norm_replacement}
h(\rho)=\max_{i\in\partial R}\,\Bigl|\sum_{k=1}^K\text{Cov}_{\!\rho}\bigl(h^{(k)\dagger}_{A},h^{(k)}_B\bigr)\Bigr|,
\end{equation}
where $\text{Cov}_{\!\rho}(h_A,h_B)$ is the covariance of operators $h_A$ and $h_B$ in $\rho$ as defined above and we note that, by the Cauchy Schwarz inequality, $|\text{Cov}_\rho(h_A,h_B)|^2\le \text{Cov}_{\!\rho}(h_A,h_A)\text{Cov}_{\!\rho}(h_B,h_B)$. 
E.g., for the Bose--Hubbard model one finds
\begin{equation}
h(\rho)\le 2d|J|\text{Cov}_{\!\rho}\bigl(b_i,b_i\bigr)
\le 2d|J|\langle n_i\rangle_{\!\rho}
\end{equation}
such that an upper bound is provided by the mean occupation number.

\begin{figure}[t]
\begin{center}  
\includegraphics[width=0.9\columnwidth,angle=0]{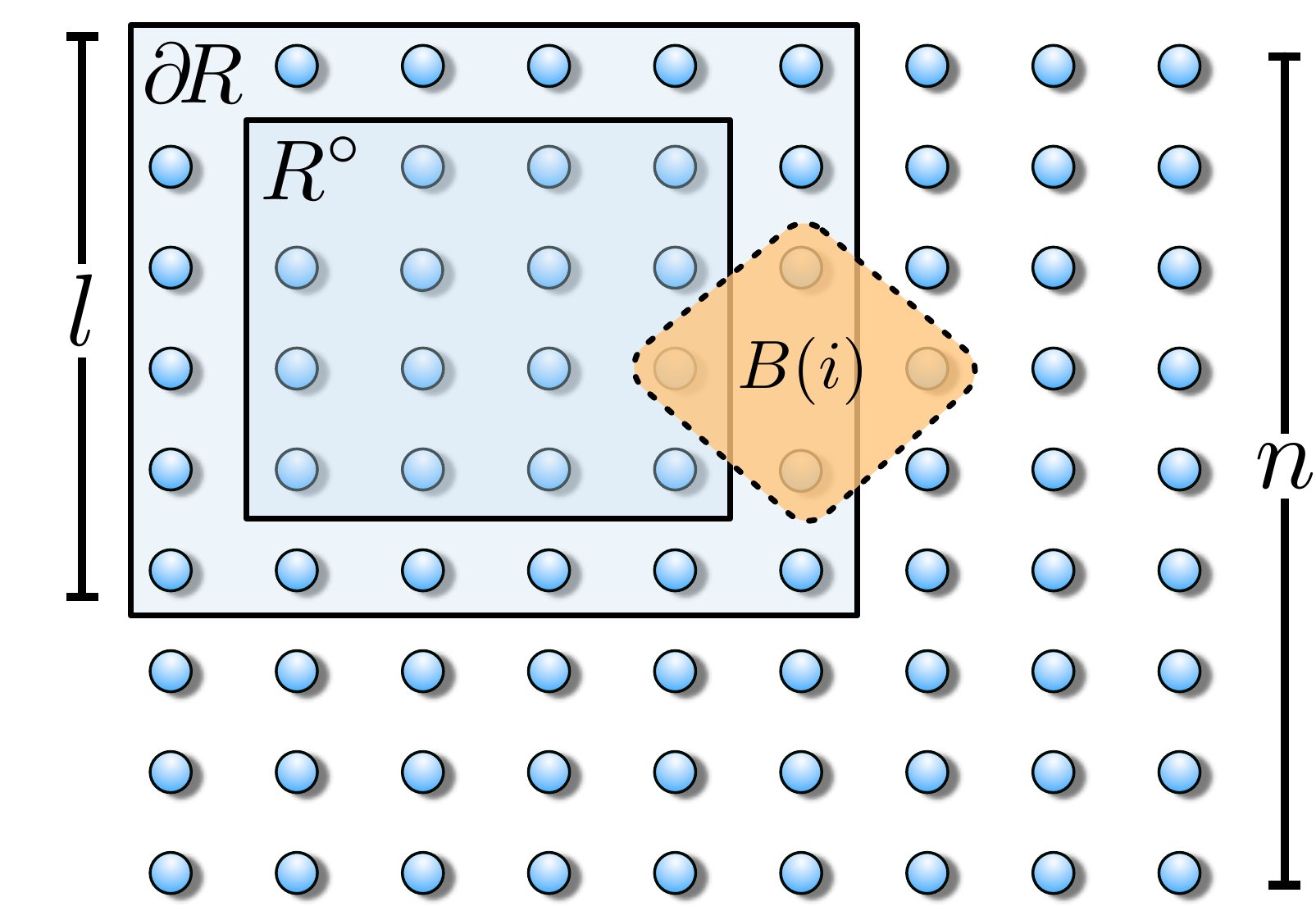}  
\caption{The cubic lattice $\Lambda=\{1,\dots,n\}^d$ under consideration. We consider Hamiltonians $H=\sum_{i\in\Lambda}(H_i+H_{B(i)})$ where each $H_i$ acts only on site $i$ and each $H_{B(i)}$ only on nearest-neighbours $B(i)$. The area law is obtained for cubic regions $R$ of edge length $l$, which may be partitioned into its boundary $\partial R$ and interior $R^\circ$. Whenever $i\in \partial R$ then $H_{B(i)}$ acts on $R^\circ$ and $\Lambda\backslash R$ as shown by the yellow region. For unbounded $H_{B(i)}$ these terms enter the area-law bound  through Eq.\ \eqref{op_norm_replacement}. \label{fig1}}  
\end{center}  
\end{figure}  

 Given a region $R \subset \Lambda$, we denote by  $\tr_{\Lambda \backslash R}$ the partial trace over all sites in $\Lambda$ except those in $R$ and write $\rho_R=\tr_{\Lambda \backslash R}(\rho)$. 
 Our main result is the following (we present the case of non-translationally-invariant local Hamiltonians that act beyond nearest-neighbours in the Appendix).

\begin{prop}  \label{thmAreaLaw}
Let $H$ be a translationally-invariant nearest-neighbour Hamiltonian on a $d$-dimensional lattice $\Lambda =\{1,\dots,n\}^d $ as in Eq.~\eqref{ham} and let $R$ be a cubic region of edge length $l$ (and volume $l^d$).
Let $H$ have groundstate degeneracy $D$ and let $\rho=|\psi\rangle\langle\psi|$ a translationally-invariant state with $\text{tr}(H\sigma)\le Cn^d/l$ for some $C\ge 1$. 
Let $T_c$  such that 
\begin{equation}
u(T_c) =  \frac{C+4dh(\sigma)}{l}.
\end{equation}
If there are $\gamma, k ,\Delta > 0$ such that for every $T \leq T_c$
\begin{enumerate}
\item 
$c(T) \leq k (\Delta/T)^\gamma \me^{- \Delta / T}$  then there is a constant $C_0$ depending only on $C,\gamma,\Delta,k,h,d$ such that
\begin{equation} \label{arealawlog}
S( \rho_R ) \leq C_{0} l^{d-1}\log(l).
\end{equation}

\item $c(T) \leq k (T/\Delta)^{\gamma}$ then there is a constant $C_0$ depending only on $C,\gamma,\Delta,k,h,d$ such that
\begin{equation}
S( \rho_R ) \leq C_{0}l^{d-1+\frac{1}{\gamma+1}}.
\end{equation}
\end{enumerate}
Here, $h=\|H_{B(i)}\|$ for bounded $H_{B(i)}$ and $h= h(\rho)$ as in Eq.~\eqref{blah} for unbounded $H_{B(i)}$.
\end{prop}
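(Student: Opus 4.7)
The plan is to apply Klein's inequality directly on the region $R$ with a Gibbs reference state built from the Hamiltonian restricted to $R$, and then pick the temperature so that the energy term is absorbed, leaving a bulk free-energy contribution controlled by the specific heat.

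Concretely, let $H_R^\circ := \sum_{i\in R}H_i+\sum_{B(i)\subseteq R} H_{B(i)}$ collect all terms of $H$ fully supported inside $R$, and set $Z_R^\circ(\beta) = \text{tr}_R(\me^{-\beta H_R^\circ})$. Klein's inequality gives $S(\rho_R)\leq \beta\,\text{tr}(\rho H_R^\circ)+\log Z_R^\circ(\beta)$ for any $\beta=1/T>0$. Translation invariance of $\rho$ together with $\text{tr}(\rho H)\leq Cn^d/l$ forces the per-site energy to be at most $C/l$, so $\text{tr}(\rho H_R^\circ)\leq Cl^{d-1}+O(dh\, l^{d-1})$, the second term absorbing the boundary $H_{B(i)}$ terms excluded from $H_R^\circ$ (bounded via $h(\rho)$ and Cauchy--Schwarz in the unbounded case, exactly as in the Bose--Hubbard illustration around Eq.~\eqref{op_norm_replacement}). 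For the log-partition function, I would split $H=H_R^\circ+H_{\Lambda\backslash R}^\circ+H_\partial$ with the first two commuting, apply Golden--Thompson to $\me^{-\beta(H_R^\circ+H_{\Lambda\backslash R}^\circ)}=\me^{-\beta(H-H_\partial)}$, and combine with extensivity of the thermodynamic free energy $\log Z_T = n^d(s(T)-\beta u(T))+o(n^d)$ to obtain $\log Z_R^\circ(\beta)\leq l^d(s(T)-\beta u(T))+O(\beta h\, l^{d-1}) + O(\log D)$.

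With these two estimates, the choice $u(T_c)=(C+4dh)/l$ is precisely what forces $\beta_c\,[\text{tr}(\rho H_R^\circ)-l^d u(T_c)]$ to be non-positive up to the boundary correction, leaving
\[
S(\rho_R)\leq l^d\, s(T_c) + O(\beta_c h\, l^{d-1}) + O(\log D).
\]
The specific-heat hypotheses now close the argument. In case~1, integrating $c(T)\leq k(\Delta/T)^\gamma\me^{-\Delta/T}$ yields $u(T)\sim (\Delta/T)^{\gamma-2}\me^{-\Delta/T}$, so $u(T_c)\sim 1/l$ forces $\Delta/T_c\sim\log l$; then $s(T_c)\sim (\Delta/T_c)^{\gamma-1}\me^{-\Delta/T_c}\sim (\log l)/l$, giving both $l^d s(T_c)$ and $\beta_c l^{d-1}$ of order $l^{d-1}\log l$. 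In case~2, $c(T)\leq k(T/\Delta)^\gamma$ integrates to $u(T)\sim T^{\gamma+1}$ and $s(T)\sim T^\gamma$, so $T_c\sim l^{-1/(\gamma+1)}$ and both $l^d s(T_c)$ and $\beta_c l^{d-1}$ scale as $l^{d-1+1/(\gamma+1)}$.

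The main technical difficulty is the estimate on $\log Z_R^\circ(\beta)$ with boundary correction controlled by a \emph{state-dependent} quantity. In the bounded case the operator-exponential comparison contributes $\beta\|H_\partial\|\lesssim \beta h\,l^{d-1}$, as desired. In the unbounded case, however, $\|H_\partial\|$ is infinite and the boundary contribution must instead be controlled by the covariance-based $h(\rho)$ of Eq.~\eqref{op_norm_replacement}; I would handle this either by restricting attention to a sufficiently low-energy subspace on which the boundary terms can be effectively truncated, or by a perturbative expansion that invokes Cauchy--Schwarz on the operators appearing in the decomposition~\eqref{blah}. A secondary subtlety is that the prescription $u(T_c)=(C+4dh)/l$ is forced essentially uniquely: it is the only choice that simultaneously cancels the region energy at the $l^{d-1}$ scale and keeps $l^d s(T_c)$ at the same area-law order.
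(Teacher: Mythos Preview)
Your approach differs from the paper's and contains a gap that you yourself flag but do not close.

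The paper does \emph{not} work locally on $R$ with the restricted Hamiltonian $H_R^\circ$. Instead it tiles the entire lattice $\Lambda$ into $M=n^d/l^d$ translates $R_1,\dots,R_M$ of $R$, forms the product state $\tilde\sigma=\bigotimes_m \rho_{R_m}$, and applies the \emph{global} variational principle $F_T(\tilde\sigma)\ge F_T(\rho_T)$. Since $\tilde\sigma$ and $\rho$ agree on each $R_m$, the energy difference $\text{tr}(H(\rho-\tilde\sigma))$ involves only boundary terms $H_{B(i)}$ with $i\in\partial R_m$; and because $\tilde\sigma$ is a tensor product across the cut $R_m$ versus $\Lambda\backslash R_m$, each such term is exactly $\sum_k\text{Cov}_\rho(h_A^{(k)\dagger},h_B^{(k)})$. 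Thus the state-dependent $h(\rho)$ emerges without any operator-norm bound on $H_\partial$. Additivity of entropy and translation invariance give $S(\tilde\sigma)=M\,S(\rho_R)$, so one obtains $S(\rho_R)\le l^d s(T)$ directly: the boundary cost is absorbed into the choice of $T_c$ via $u(T_c)=(C+4dh)/l$, and no separate $\beta_c h\,l^{d-1}$ term or local partition function $Z_R^\circ$ ever appears. The specific-heat integration you describe then finishes the argument as in the paper.

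Your route, by contrast, leaves two steps open. First, the estimate $\log Z_R^\circ(\beta)\le l^d(s(T)-\beta u(T))+O(\beta h\,l^{d-1})$ requires a matching \emph{lower} bound on $\log Z_{\Lambda\backslash R}^\circ$ of order $(n^d-l^d)(s(T)-\beta u(T))$; Golden--Thompson goes only one way, and you have not supplied the extensivity-with-boundary-control argument. Second, and more seriously, in the unbounded case Golden--Thompson produces a factor $e^{\beta\|H_\partial\|}$, which is infinite; your suggested fixes (low-energy truncation, perturbative expansion) are sketched rather than carried out, and it is not clear either can deliver a bound depending only on the covariances of Eq.~\eqref{op_norm_replacement}. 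The paper's tiling device sidesteps both problems at once: the comparison is between two \emph{states} under the same global Hamiltonian, so no operator exponentials are ever manipulated, and the only boundary cost is a linear energy difference, which \emph{is} the covariance sum.
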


Part~1 gives an area law with logarithmic correction for the von Neumann entropy of every low-energy state, assuming that the specific heat decreases exponentially with inverse temperature at temperatures smaller than $T_c$, which decreases with increasing $l$. Part 2, in turn, shows merely a subvolume law; however only the weaker assumption $c(T) \leq k T^{\gamma}$ is required.

We expect that part~1 of the proposition can be strengthened to show a strict area law, without a logarithmic correction. However we do not have a proof and leave it as an open question. A drawback of the proposition is that one must know the behaviour of the specific heat at arbitrary small temperatures. This can be circumvented if we assume that at small temperatures the heat capacity is monotonically increasing with temperature, as is usually observed. Then it is enough that the heat capacity decays as specified in the proposition only in the range $[T_c/2, T_c]$.


Which Hamiltonians have these two types of heat capacity dependence on temperature? Although we do not have a general result concerning the question, the class of systems satisfying the required conditions appears to be very large. Indeed gapped models are expected to have $c(T) \leq T^{-\nu} e^{- \Delta / T}$ at all sufficiently small temperatures \cite{Ein1907}. Examples where this has been verified explicitly---both in theory and experimentally---are superconductors \cite{Tin04, Cor46, DBKHBM97}, quantum Hall systems \cite{SW07, MA98, CP97}, and some lattice spin systems \cite{MG07, QRMGCK07, HRB98}. The condition $c(T) \leq T^{\gamma}$ is even more general, and is the behaviour routinely observed in measurements of heat capacity of solids. It is an interesting open question to give more rigorous results in this respect.


An appealing aspect of the result is that the heat capacity is readily accessible experimentally (in fact the decay of the heat capacity of the form $T^{-\nu} e^{- \Delta / T}$ is many times used as an experimental signature that the system has a spectral gap $\Delta$; for instance the first measurement of the superconductor gap was achieved precisely by measuring this form for the specific heat \cite{Cor46}). Thus one can infer an area law even when the Hamiltonian is not fully known.

The proof of the proposition is relatively simple. It is a consequence of the variational characterization of thermal states as states of minimum free energy, together with standard thermodynamical formulas relating specific heat, energy density, and entropy density. We start with the following lemma:

\begin{lemma}  \label{thelemma}
Let $H$ on $\Lambda=\{1,\dots,n\}^d$ as above and let $R$ a cubic region of edge length $l$. For every translationally state $\rho$ and temperature $T$ such that $u(T)\ge\text{tr}(H\rho)/n^d+4dh(\rho)/l$ one has
\begin{equation}
S(\rho_R) \leq l^d s(T).
\end{equation}
Here, $h(\rho)=\|H_{B(i)}\|$ for bounded $H_{B(i)}$ and $ h(\rho)$ as in Eq.~\eqref{blah} for unbounded $H_{B(i)}$.
\end{lemma}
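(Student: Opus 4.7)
The plan is to combine the Gibbs variational principle with a product-state ansatz built from copies of $\rho_R$ tiling the lattice. Assuming for notational convenience that $l$ divides $n$, I would tile $\Lambda$ by $N=(n/l)^d$ translates of $R$ and set $\sigma = \rho_R^{\otimes N}$, one copy on each tile. Applying the variational principle to $\sigma$ against the thermal state $\rho_T$ gives $\tr(H\sigma) - TS(\sigma) \geq \tr(H\rho_T) - TS(\rho_T)$, which rearranges to
\begin{equation}
S(\sigma) \leq n^d s(T) + \frac{1}{T}\bigl(\tr(H\sigma) - n^d u(T)\bigr).
\end{equation}
Since $\sigma$ is a product, $S(\sigma) = N S(\rho_R)$, so the target inequality $S(\rho_R) \leq l^d s(T)$ reduces to showing that $\tr(H\sigma) \leq n^d u(T)$.

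Next I would estimate the discrepancy $\tr(H\sigma) - \tr(H\rho)$. By translational invariance and consistency of marginals, the on-site terms $H_i$ and the nearest-neighbour terms $H_{B(i)}$ whose support lies strictly inside a single tile contribute identically in $\sigma$ and in $\rho$. The only mismatch comes from those $H_{B(i)}$ with $i$ on the inner boundary $\partial R$ of some tile, whose support straddles the cut. For each such term, writing it in the form $H_{B(i)}=\sum_k h^{(k)}_A\otimes h^{(k)}_B$ as in Eq.~\eqref{blah} and using $\langle h^{(k)}_A\rangle_{\!\sigma} = \langle h^{(k)}_A\rangle_{\!\rho}$ and $\langle h^{(k)}_B\rangle_{\!\sigma} = \langle h^{(k)}_B\rangle_{\!\rho}$ (both follow from translational invariance of $\rho$ and from $\sigma$ being a product of its marginals), one finds
\begin{equation}
\tr(H_{B(i)}\rho) - \tr(H_{B(i)}\sigma) = \sum_{k} \text{Cov}_{\!\rho}\bigl(h^{(k)\dagger}_A, h^{(k)}_B\bigr),
\end{equation}
whose absolute value is at most $h(\rho)$ by Eq.~\eqref{op_norm_replacement}, and trivially by $\|H_{B(i)}\|$ in the bounded-operator case.

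Counting boundary sites ($|\partial R|\leq 2dl^{d-1}$ per tile) and summing over all $N$ tiles yields $|\tr(H\sigma) - \tr(H\rho)| \leq 4dn^dh(\rho)/l$, the factor $4d$ absorbing inter-tile couplings attributed to two adjacent boundaries. Under the hypothesis $u(T) \geq \tr(H\rho)/n^d + 4dh(\rho)/l$, this gives $\tr(H\sigma) \leq n^d u(T)$; inserted into the variational inequality, it delivers $NS(\rho_R) \leq n^d s(T)$, i.e.\ $S(\rho_R) \leq l^d s(T)$, as claimed.

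The main obstacle is controlling the boundary mismatch when the $H_{B(i)}$ are unbounded, since no operator-norm estimate is available. This is precisely why $h(\rho)$ is defined through a covariance rather than a norm: the product ansatz cancels the one-point expectations across the cut, leaving only a two-point function which remains finite for the bosonic Hamiltonians of physical interest.
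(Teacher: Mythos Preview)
Your proof is correct and follows essentially the same route as the paper's: tile $\Lambda$ by translates of $R$, form the product state $\sigma=\bigotimes_m\rho_{R_m}$ (equivalently $\rho_R^{\otimes N}$ by translational invariance), apply the Gibbs variational principle $F_T(\sigma)\ge F_T(\rho_T)$, use additivity of entropy on the product, and bound the energy mismatch $\tr(H(\rho-\sigma))$ by noting that only interactions straddling tile boundaries contribute, with each such term controlled either by $2\|H_{B(i)}\|$ or by the covariance expression defining $h(\rho)$. One cosmetic remark: your justification of the extra factor of $2$ as ``inter-tile couplings attributed to two adjacent boundaries'' is not quite the mechanism---each $H_{B(i)}$ is indexed by a single boundary site $i$ in a single tile; the factor of $2$ in the bounded case arises from $|\tr(H_{B(i)}(\rho-\sigma))|\le 2\|H_{B(i)}\|$ via the trace-norm bound on a difference of states, while in the unbounded case the covariance identity gives $h(\rho)$ directly and the $4d$ is simply a convenient uniform upper bound.
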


\begin{proof} Partition the lattice into $M=n^d/l^d$ cubic subsets $R_1,\dots,R_M$, each of edge-length $l$ and define
\begin{equation}
\sigma=\bigotimes_{m=1}^M\rho_{R_m},
\end{equation}
for which $\sigma_{R_m}= \rho_{R_m}$, i.e., $\sigma$ and $ \rho$ coincide locally on each $R_{m}$. 
As the thermal state minimizes the free energy 
$F_T(\rho) := \tr(H\rho)-TS(\rho)$
(This well-known property of the free energy is a direct consequence of the identity $F_T(\rho) = F_T(\rho_T) + S(\rho || \rho_T)$, where $S(\rho || \rho_T)$ is the relative entropy, which is always non-negative) one finds
\begin{equation}
\label{free}
T S(\sigma) \le T S(\rho_T)-\tr(H \rho_T)+\tr(H \rho) -\tr(H (\rho-\sigma)).
\end{equation}
Using translational invariance and the fact that $\rho$ and $ \sigma$ coincide on each $R_m$
\begin{equation}
\begin{split}
\tr(H(\rho-\sigma))=M\sum_{i\in \partial R_1}\tr\left(H_{B_r(i)}(\rho_{B(i)}- \sigma_{B(i)})\right)
\end{split}
\end{equation}
such that for bounded $H_{B_r(i)}$
\begin{equation}
\begin{split}
|\tr(H(\rho- \sigma))|&\le 2M\max_m|\partial R_m|h
\le 4dhMl^{d-1}.
\end{split}
\end{equation}
For unbounded $H_{B_r(i)}$, $i\in\partial R_1$, we write 
\begin{equation}
\begin{split}
\tr\left(H_{B_r(i)}(\rho- \sigma)\right)&=\tr\left(H_{B_r(i)}(\rho- \sigma_{R_1}\otimes \sigma_{\Lambda\backslash R_1})\right)\\
&=\sum_{k=1}^K\text{Cov}_{\!\rho}\left(h_A^{(k)\dagger} ,h_B^{(k)}\right).
\end{split}
\end{equation}
By the additivity of the entropy and translational invariance $MS(\sigma_{R_1})=\sum_{m=1}^MS(\sigma_{R_m})=S(\sigma)$
and by Eq.~\eqref{free} and the above bounds
\begin{equation}
\begin{split}
 S(\sigma)\le   S(\rho_T)+\frac{\tr(H \rho) -n^du(T)+4dh(\rho)Ml^{d-1}}{T}
 \end{split}
\end{equation}
such that the assertion follows by the choice of $M$ and whenever $\sigma$ and $T$ are as in the hypothesis. 
\end{proof}
We now turn to the proof of the proposition.

\begin{proof}[Proof of Proposition \ref{thmAreaLaw}]
By Lemma~\ref{thelemma} and for $\rho$ and $T_c$ as in the hypothesis
\begin{equation}
\label{Sbound}
\begin{split}
\frac{1}{l^d}S(\sigma_{R})&\le  s(T_c)=s(0)+\int_0^{T_c}\md T\,\frac{\dot u(T)}{T}.
\end{split}
\end{equation}
We now derive a bound on temperatures $T$ below $T_c$ in terms of $u(T)$.
For $c(T)$ upper bounded as in part~2
\begin{equation}
\begin{split}
\frac{u(T)}{k}=\int_0^T\!\!\!\md t\,\frac{c(t)}{k}\le  \int_0^T\!\!\!\md t\,\frac{t^\gamma}{\Delta^\gamma}=\frac{\Delta}{\gamma+1}\left(\frac{T}{\Delta}\right)^{\gamma+1},
\end{split}
\end{equation}
i.e., $T^{-1}\le C_{k,\Delta,\gamma}\dot g(u(T))$, where $g(u)=u^{\frac{\gamma}{\gamma+1}}$. Hence,
\begin{equation}
\begin{split}
\int_0^{T_c}\!\!\!\md T\,\frac{\dot u(T)}{C_{k,\Delta,\gamma}T}&\le \int_0^{T_c}\!\!\!\md T\,\frac{\md}{\md T} g(u(T))
=g(u(T_c)),
\end{split}
\end{equation}
which, when inserting the definition of $g$ and combining it with Eq.~\eqref{Sbound}, proves part~2.
Part~1 is proved in complete analogy and may be found in the Appendix.
\end{proof}

\noindent \textbf{Comparison with Previous Work:} The fact that the Gibbs state minimizes the free energy was instrumental in the proof of Proposition~1. It is not the first time that this relation is used in the context of area laws: It was employed in \cite{WFHC08} to show a general area law for the mutual information of every Gibbs state of a local Hamiltonian at constant temperature. 
Concerning groundstates, Hastings \cite{Has07c} and Masanes \cite{Mas09} proved an area law with a $O(\log(n))$ correction (logarithmic of the total system size) for gapped models, assuming that the density of states $D(E)$ grows polynomially in the volume of the lattice for energies $E \leq O(n^d)$, i.e. $D(E) \leq n^{O(E)}$. The density of states and the heat capacity are related by
\begin{eqnarray} \label{cvcD}
 n^d T^2 c(T)  &=& \frac{1}{Z_{T}}\int D(E) e^{- E/T} E^2 dE \nonumber \\ &&\hspace{0.5cm} -  \left(  \frac{1}{Z_{T}} \int D(E) e^{- E/T} E dE   \right)^2. \nonumber 
\end{eqnarray}
Therefore it is interesting to compare the results of \cite{Has07c, Mas09} with Proposition~1. Assuming $D(E) \leq n^{cE}$ for a constant $c \geq 0$ and that $H$ has spectral gap $\Delta > 0$  we find
\begin{eqnarray} \label{implication}
c(T) &\leq& \frac{1}{n^d T^2 } \int_{\Delta}^{\infty} n^{cE} e^{- E/T} E^2 dE \nonumber \\ &\leq& O(T^{-1}n^{c\Delta - d}e^{- \Delta / T}).
\end{eqnarray}
Thus, part~1 of Proposition~1 gives that $S(\tr_{\Lambda \backslash R}(\ket{\psi}\bra{\psi}) \leq O(l^{d-1}\log(n))$, recovering the area law result of \cite{Has07c, Mas09}  \footnote{We note \cite{Mas09} also shows that if the density of states of the sub-Hamiltonian on region $R$ plus a boundary of $R$ (of width $O(\log |\partial R|)$) grows polynomially with the energy (up to energies of order $|\partial R|$) and the groundstate has a finite correlation length, then it satisfies an area law with a correction of $O(\log |R|)$. }. Our argument has the advantage of being simpler than \cite{Has07c, Mas09}, which were based on high temperature expansion and Lieb-Robinson bounds, respectively. In contrast all we needed is the variational principle for the free energy and basic relations between specific heat and energy/entropy.

\noindent \textbf{PEPS Approximation:} Ref. \cite{Has07c} also proved that under the above assumption on the density of states, a constant spectral gap, and bounded Hamiltonian, the groundstate is well approximated by the Gibbs state at temperature $O(1/\log(n))$. Using the result of \cite{Has06} it was then argued that such groundstates are well approximated by a projected-pair-entangled-operator (PEPO---the mixed state analogue of projected-pair-entangled-states (PEPS)) \cite{VC06} of quasipolynomial $\exp(O(\log^d(n)))$ bond dimension. We can easily show that this is already the case under our assumption on the heat capacity: 

Under the assumptions of part~1 of Proposition~1, one can show (see Appendix) that for every $0<\delta\le 1$ and $T\le \delta/\log(n)$ the maximally mixed state $\rho_0$ on the groundspace of $H$ is well approximated by the thermal state at temperature $T$,
\begin{equation}
\|\rho_T-\rho_0\|_1\lesssim \left(\frac{\log(n)}{\delta}\right)^{\gamma-1}n^{d-\Delta/\delta},
\end{equation}
where $\Delta,\gamma$ are as in Proposition~1.
Ref. \cite{MSVC14}, which builds on \cite{Has06, KGKRE14}, shows that $\rho_T$ can be approximated with error $\varepsilon$ in trace norm by a PEPO of bond dimension $(n/\varepsilon)^{O(\log(n))}$.

\noindent \textbf{Conclusion:} We showed that specific heat dependence on temperature commonly observed in gapped systems implies an area law with logarithmic correction for every low-energy state, and, building on  \cite{Has06, KGKRE14, MSVC14}, that the groundstate can be approximated by a PEPS of small bond dimension. Even the mild assumption that the heat capacity decays polynomially already puts non-trivial constraints on the amount of entanglement in the system. We believe our results are valuable for four main reasons: First the condition on the specific heat is natural, can be checked experimentally, and relates entanglement scaling to a thermodynamic quantity; second it implies an area law not only for the groundstate, but for every state of sufficiently low energy; third it applies to unbounded Hamiltonians; finally, the argument is very simple. 
It is an interesting open question to prove a strict area law, without any logarithmic correction, under the assumption of an exponentially decreasing specific heat. Another interesting direction for future research is to elucidate the class of models having the required behaviour for the specific heat. Although examples of gapped models violating it can be constructed, it might still be possible to prove it for a large class of gapped models.


\section{Acknowledgements}

FGSLB acknowledges EPSRC and MC the EU Integrated Project SIQS and the Alexander von Humboldt foundation
 for financial support.

\appendix
\widetext

\section{More general version of Proposition 1 and Lemma 2}
We let $H$ a $r$-local Hamiltonian on a $d$-dimensional cubic lattice $\Lambda=\{1,\dots,n\}^d$. That is, 
\begin{equation}
H=\sum_{i\in\Lambda}H_{B_r(i)},
\end{equation}
where  each $H_{B_r(i)}$ acts only on sites
\begin{equation}
B_r(i)=\left\{j\in\Lambda\,\big|\,d(i,j)\le r\right\},
\end{equation}
where $d(\cdot,\cdot)$ denotes the Manhattan distance in the lattice.
Let $M=n^d/l^d$ and partition
\begin{equation}
\Lambda=\bigcup_{m=1}^MR_{m}
\end{equation}
with $R_m$ cubes of edge length $l$ and denote by 
\begin{equation}
\mathbb{E}[f(R)]=\frac{1}{M}\sum_{m=1}^Mf(R_m)
\end{equation}
the uniform expectation of $f$ over all the $R_m$. Further, $R_m=v+\{1,\dots,l\}^d$ for appropriate $v=v(m)$ and we denote the ``interior'' of $R_m$ by
$R^\circ_m=v+\{1+r,\dots,l-r\}^d$ and the ``boundary'' of $R_m$ by
$\partial R_m=R_m\backslash R^\circ_m$, for which we have $|\partial R_m|\le 2drl$. For given $i\in\partial R_m$ let $A=A(i)=B_r(i)\cap R_m$, $B=B(i)=B_r(i)\cap (\lambda\backslash R_m)$, and
\begin{equation}
H_{B_r(i)}=\sum_{k=1}^K h_{A}^{(i,k)}\otimes h_{B}^{(i,k)}.
\end{equation}
For a given state $\rho$ define
\begin{equation}
h(\rho)=\min\left\{\max_{i\in\Lambda}\|H_{B_r(i)}\|,\,\max_{i\in \partial R}\,\bigl|\sum_{k=1}^K \text{Cov}_{\!\rho}\bigl(h_{A}^{(i,k)\dagger}, h_{B}^{(i,k)}\bigr)\bigr|\right\}.
\end{equation}
For states $\rho$ and $A\subset\Lambda$ we denote $\rho_{A}=\text{tr}_{\Lambda\backslash A}[\rho]$ so the state obtained by tracing over all sites apart from those in $A$.
We have the following lemma.
\begin{lemma} If $\sigma$ and $T$ are such that $\tr(H \sigma)/n^d +4drh(\sigma)/l\le u(T)$
then $\mathbb{E} [S(\sigma_{R})]\le  l^ds(T)$.
\end{lemma}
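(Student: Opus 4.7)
The plan is to mimic the proof of Lemma \ref{thelemma} in the main text, but without invoking translational invariance: instead of exploiting that every $S(\sigma_{R_m})$ equals $S(\sigma_R)$, I will take the uniform average over $m$, which is precisely what the left-hand side $\mathbb{E}[S(\sigma_R)]$ encodes. First I would introduce the product-decoupled state
\begin{equation}
\tau=\bigotimes_{m=1}^{M}\sigma_{R_m},
\end{equation}
which by construction satisfies $\tau_{R_m}=\sigma_{R_m}$ for each $m$, so $\sigma$ and $\tau$ have identical reduced states on every cube of the partition. The variational characterization of the Gibbs state, $F_T(\rho_T)\le F_T(\tau)$, together with additivity of the von Neumann entropy on tensor products, then gives
\begin{equation}
T\sum_{m=1}^{M}S(\sigma_{R_m}) \;=\; TS(\tau) \;\le\; TS(\rho_T)-\tr(H\rho_T)+\tr(H\tau),
\end{equation}
and I would rewrite $\tr(H\tau)=\tr(H\sigma)-\tr\bigl(H(\sigma-\tau)\bigr)$ to isolate the interface term.

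The key step is to bound $|\tr(H(\sigma-\tau))|$ in terms of the boundary of the partition. Any $r$-local term $H_{B_r(i)}$ with $i\in R_m^\circ$ is supported entirely inside $R_m$, so its expectation is the same in $\sigma$ and in $\tau$. Only sites $i\in\partial R_m$ can contribute, giving at most $M\max_m|\partial R_m|\le 2drMl^{d-1}$ such terms (a shell of width $r$ around a cube of side $l$). For each of these, the bounded case yields $|\tr(H_{B_r(i)}(\sigma-\tau))|\le 2\|H_{B_r(i)}\|$; in the unbounded case I would use the decomposition $H_{B_r(i)}=\sum_k h_A^{(i,k)}\otimes h_B^{(i,k)}$ together with $\tau=\sigma_{R_m}\otimes\sigma_{\Lambda\setminus R_m}$ (after suitable regrouping) to rewrite
\begin{equation}
\tr\bigl(H_{B_r(i)}(\sigma-\tau)\bigr)=\sum_{k=1}^{K}\text{Cov}_{\sigma}\bigl(h_A^{(i,k)\dagger},h_B^{(i,k)}\bigr),
\end{equation}
exactly as in the proof of Lemma \ref{thelemma}. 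Either way the bound $|\tr(H(\sigma-\tau))|\le 4drh(\sigma)Ml^{d-1}$ holds with the $h(\sigma)$ defined in the appendix, since $h(\sigma)$ is explicitly the minimum of the two admissible quantities.

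Putting these together, dividing by $TM$ and using $M=n^d/l^d$, $\tr(H\rho_T)=n^d u(T)$, $S(\rho_T)=n^d s(T)$, I would obtain
\begin{equation}
\mathbb{E}[S(\sigma_R)] \;\le\; l^{d}s(T) + \frac{l^{d}}{T}\Bigl(\tr(H\sigma)/n^{d}-u(T)+4drh(\sigma)/l\Bigr),
\end{equation}
and the bracketed quantity is nonpositive by the hypothesis $\tr(H\sigma)/n^{d}+4drh(\sigma)/l\le u(T)$, yielding the claim. The main obstacle I anticipate is the interface bound: the boundary shell now has thickness $r$ rather than $1$, so I must be careful that the counting $|\partial R_m|\le 2drl^{d-1}$ really captures all $r$-local terms straddling two cubes of the partition, and that the covariance-form bound transfers verbatim from the nearest-neighbour case to the $r$-local case after the regrouping $A=B_r(i)\cap R_m$, $B=B_r(i)\cap(\Lambda\setminus R_m)$.
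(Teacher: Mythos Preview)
Your proposal is correct and essentially identical to the paper's own proof: the same product-decoupled state $\tau=\bigotimes_m\sigma_{R_m}$, the same use of the free-energy variational principle together with additivity of entropy, the same localization of $\tr(H(\sigma-\tau))$ to boundary sites $i\in\partial R_m$, and the same two-case interface bound (operator norm versus covariance) leading to $|\tr(H(\sigma-\tau))|\le 4drh(\sigma)Ml^{d-1}$. Your anticipated obstacles are exactly the points the paper handles, and your boundary count $|\partial R_m|\le 2drl^{d-1}$ is in fact the one that makes the arithmetic close.
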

\begin{proof}
Define
\begin{equation}
\tilde \sigma=\bigotimes_{i=1}^M\sigma_{R_i},
\end{equation}
for which $\tilde \sigma_{R_i}= \sigma_{R_i}$, i.e., $\sigma$ and $\tilde \sigma$ coincide locally on each $R_i$.
As the thermal state minimizes the free energy we have
\begin{equation}
T S(\tilde\sigma) \le T S(\rho_T)-\tr(H \rho_T)+\tr(H \tilde\sigma) =T S(\rho_T)-\tr(H \rho_T)+\tr(H \sigma) -\tr(H (\sigma-\tilde\sigma)),
\end{equation}
where, as $\sigma$ and $\tilde \sigma$ coincide on each $R_i$,
\begin{equation}
\begin{split}
\tr(H(\sigma-\tilde \sigma))&=\sum_{i\in\Lambda}\tr\left(H_{B_r(i)}(\sigma-\tilde \sigma)\right)
=\sum_{m=1}^M\sum_{i\in R_m}\tr\left(H_{B_r(i)}(\sigma-\tilde \sigma)\right)
=\sum_{m=1}^M\sum_{i\in \partial R_m}\tr\left(H_{B_r(i)}(\sigma_{B_r(i)}-\tilde \sigma_{B_r(i)})\right),
\end{split}
\end{equation}
such that for bounded $H_{B_r(i)}$
\begin{equation}
\begin{split}
|\tr(H(\sigma-\tilde \sigma))|&\le 2M\max_m|\partial R_m|\max_i\|H_{B_r(i)}\|\le 4drhn^d/l.
\end{split}
\end{equation}
For unbounded $H_{B_r(i)}$, $i\in\partial R$, we write 
\begin{equation}
\begin{split}
\tr\left(H^{(i)}_{B_r(i)}(\sigma-\tilde \sigma)\right)&=\tr\left(H^{(i)}_{B_r(i)}(\sigma- \sigma_{R_m}\otimes \sigma_{\Lambda\backslash R_m})\right)\\
&=\sum_{k=1}^K\text{Cov}_{\!\sigma}\left(h_A^{(i,k)\dagger} ,h_B^{(i,k)}\right).
\end{split}
\end{equation}
Hence, by the additivity of the entropy,
\begin{equation}
\sum_{i=1}^MS(\sigma_{R_i})=
 S(\tilde\sigma) \le  S(\rho_T)+n^d\frac{\tr(H \sigma)/n^d +4drh(\sigma)/l-u(T)}{T}
\end{equation}
and thus the assertion follows whenever $\sigma$ and $T$ are as in the hypothesis. 
\end{proof}

\begin{prop} Let $\sigma$ such that $\text{tr}(H\sigma)\le Cn^d/l$ for some $C\ge 1$ and $T_c$ such that 
\begin{equation}
u(T_c)=\frac{C+4drh(\sigma)}{l}.
\end{equation}
Then
\begin{equation}
\mathbb{E}[S(\sigma_{R})]\le s(0)l^d+F_{k,\gamma,\Delta,l}l^{d-1},
\end{equation}
where, if $c(T)\le k(T/\Delta)^\gamma$ for some $k,\Delta,\gamma>0$ and all $T\le T_c$,
\begin{equation}
F_{k,\gamma,\Delta,l}=\frac{k^{\frac{1}{\gamma+1}}}{\gamma}(\gamma+1)^{\frac{\gamma}{\gamma+1}}\left(\frac{C+4drh(\sigma)}{\Delta}\right)^{\frac{\gamma}{\gamma+1}}
l^{\frac{1}{\gamma+1}}
\end{equation}
and, if $c(T)\le k (\Delta/T)^\gamma\me^{-\Delta/T}$ for some $k,\Delta,\gamma>0$ and all $T\le T_c$,
\begin{equation}
F_{k,\gamma,\Delta,l}=
2\left(\ln\left(k\gamma^{\gamma-1}\right)+1+\gamma/2+\ln(l)\right)\frac{C+4drh(\sigma)}{\Delta}.
\end{equation}
\end{prop}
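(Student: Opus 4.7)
The plan is to combine the preceding lemma with the thermodynamic identity $\dot s(T) = c(T)/T$ and then control the resulting integral under each of the two specific-heat hypotheses. Since the choice of $T_c$ is exactly what the lemma requires, applying it immediately yields
\begin{equation}
\mathbb{E}[S(\sigma_R)] \le l^d s(T_c) = l^d s(0) + l^d \int_0^{T_c} \frac{c(T)}{T} \, \md T,
\end{equation}
so everything reduces to bounding the integral by $F_{k,\gamma,\Delta,l}/l$ using only $u(T_c) = (C+4drh(\sigma))/l$ and the assumed decay of $c(T)$.

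For the polynomial case $c(T) \le k(T/\Delta)^\gamma$, I would repeat the argument already given in the main text: integrate to obtain $u(T) \le \frac{k\Delta}{\gamma+1}(T/\Delta)^{\gamma+1}$, invert to get $1/T \le \dot g(u(T))$ with $g(u) = \frac{\gamma+1}{\gamma}\bigl(k/((\gamma+1)\Delta^\gamma)\bigr)^{1/(\gamma+1)} u^{\gamma/(\gamma+1)}$, and use the chain rule to conclude $\int_0^{T_c} \dot u/T \,\md T \le g(u(T_c))$. Substituting $u(T_c) = (C+4drh(\sigma))/l$ and multiplying through by $l^d$ produces exactly the stated polynomial form of $F_{k,\gamma,\Delta,l}$; the only difference from the proof in the main text is the replacement of $h$ by $h(\sigma)$ and of the nearest-neighbour boundary factor $4d$ by the $r$-local factor $4dr$ contributed by the lemma.

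For the exponential case I would change variables to $x = \Delta/T$ in both integrals. Using $c(T) \le k(\Delta/T)^\gamma \me^{-\Delta/T}$, the energy density is bounded by $u(T) \le k\Delta \int_{\Delta/T}^\infty x^{\gamma-2} \me^{-x} \md x$, and the entropy integral becomes $\int_0^{T_c} c(T)/T \,\md T \le k \int_{\Delta/T_c}^\infty x^{\gamma-1} \me^{-x} \md x$, both upper incomplete gamma functions. For $y$ larger than a $\gamma$-dependent threshold these obey the standard tail bound $\int_y^\infty x^{\alpha-1}\me^{-x}\md x \le 2 y^{\alpha-1}\me^{-y}$. Inverting the resulting transcendental constraint, which has the schematic form $k\Delta y^{\gamma-2}\me^{-y} \asymp (C+4drh(\sigma))/l$, yields $y = \Delta/T_c \gtrsim \ln(l) + \text{const}(k,\gamma)$; feeding this $y$ back into the entropy bound and multiplying by $l^d$ reproduces the advertised logarithmic form of $F_{k,\gamma,\Delta,l}$.

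The main obstacle will be tracking constants cleanly through the exponential case, in particular recovering the precise coefficient $2(\ln(k\gamma^{\gamma-1}) + 1 + \gamma/2 + \ln(l))$: one must choose $y = \Delta/T_c$ conservatively enough that the monotonicity estimate $\int_y^\infty x^{\alpha-1}\me^{-x}\md x \le y^{\alpha-1}\me^{-y}/(1-(\alpha-1)/y)$ is tight, yet not so conservatively as to lose the leading $\ln(l)$ dependence. The polynomial case introduces no such subtlety, and the application of the lemma in the first step is entirely mechanical once the $r$-local version of $h(\sigma)$ is in hand.
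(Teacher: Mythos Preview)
Your first step (invoking the lemma at $T=T_c$) and your treatment of the polynomial case are correct and coincide with the paper's argument; the constants match exactly once $u(T_c)=(C+4drh(\sigma))/l$ is substituted.

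In the exponential case, however, your proposed route does not close. After the substitution $x=\Delta/T$ you obtain the two correct inequalities
\[
u(T_c)\;\le\;k\Delta\int_y^\infty x^{\gamma-2}\me^{-x}\,\md x,
\qquad
\int_0^{T_c}\frac{c(T)}{T}\,\md T\;\le\;k\int_y^\infty x^{\gamma-1}\me^{-x}\,\md x,
\qquad y=\Delta/T_c.
\]
The first of these is an \emph{upper} bound on $u(T_c)$, so at fixed $u(T_c)=(C+4drh(\sigma))/l$ it yields only an \emph{upper} bound $y\lesssim \ln l$, not the lower bound $y\gtrsim \ln l$ you assert. Since $\int_y^\infty x^{\gamma-1}\me^{-x}\,\md x$ is decreasing in $y$, an upper bound on $y$ is useless for bounding the entropy integral from above. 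Put differently: the hypothesis only bounds $c(T)$ from above, so $c(T)$ could be arbitrarily small, $T_c$ arbitrarily large, and your bound $k\Gamma(\gamma,y)$ on the entropy integral then becomes vacuous (the true entropy integral is of course still small, but your inequality no longer sees that).

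The paper avoids this by never attempting to locate $T_c$. Instead it inverts the energy bound pointwise to get $1/T$ controlled by $u(T)$: from $u(T)\le k\gamma^{\gamma-1}\me^{-\Delta/(2T)}$ (valid for $T\le\Delta/\gamma$) one has
\[
\frac{\Delta}{2T}\;\le\;\ln(k\gamma^{\gamma-1})-\ln u(T)\;=\;\ln(k\gamma^{\gamma-1})+\dot g(u(T)),\qquad g(u)=u-u\ln u,
\]
and then $\int_0^{T_c}\dot u/T\,\md T\le(2/\Delta)\bigl[\ln(k\gamma^{\gamma-1})\,u(T_c)+g(u(T_c))\bigr]$ depends only on $u(T_c)$, which is known. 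The regime $T_c>\Delta/\gamma$ is handled by splitting the integral at $\Delta/\gamma$ and using the trivial bound $1/T\le\gamma/\Delta$ on the tail. This is the same trick you already used in the polynomial case; the only new ingredient is choosing $g$ so that $\dot g(u)=-\ln u$ matches the inverse of an exponential rather than a power.
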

\begin{proof} By the above Lemma and for $\sigma$ and $T_c$ as in the hypothesis
\begin{equation}
\begin{split}
\frac{1}{Ml^d}\sum_{i=1}^MS(\sigma_{R_i})&\le  s(T_c)=s(0)+\int_0^{T_c}\md T\,\frac{\dot u(T)}{T}.
\end{split}
\end{equation}
We now derive a bound on temperatures $T$ below $T_c$ in terms of $u(T)$.
For $c(T)\le k(T/\Delta)^\gamma$, $\Delta,\gamma> 0$, we have
\begin{equation}
\begin{split}
u(T)&=\int_0^T\md t\,c(t)\le k \int_0^T\md t\,(t/\Delta)^{\gamma}=\frac{k\Delta}{\gamma+1}(T/\Delta)^{\gamma+1},
\end{split}
\end{equation}
i.e.,
\begin{equation}
\begin{split}
\frac{\Delta}{ T}\le\left(\frac{k\Delta}{\gamma+1}\right)^{\frac{1}{\gamma+1}}\left(u(T)\right)^{-\frac{1}{\gamma+1}}=\left(\frac{k\Delta}{\gamma+1}\right)^{\frac{1}{\gamma+1}}\frac{\gamma+1}{\gamma}\dot g(u(T)),\;\;\; g(u)=u^{\frac{\gamma}{\gamma+1}}
\end{split}
\end{equation}
such that
\begin{equation}
\begin{split}
\int_0^{T_c}\md T\,\frac{\dot u(T)}{T}&\le \frac{1}{\Delta}\left(\frac{k\Delta}{\gamma+1}\right)^{\frac{1}{\gamma+1}}\frac{\gamma+1}{\gamma}\int_0^{T_c}\md T\,  \dot g(u(T)) \dot u(T)\\
&=\frac{1}{\Delta}\left(\frac{k\Delta}{\gamma+1}\right)^{\frac{1}{\gamma+1}}\frac{\gamma+1}{\gamma}\left( g(u(T_c))-g(u(0))\right)
\\
&=\frac{1}{\Delta}\left(\frac{k\Delta}{\gamma+1}\right)^{\frac{1}{\gamma+1}}\frac{\gamma+1}{\gamma} \left(u(T_c)\right)^{\frac{\gamma}{\gamma+1}}.
\end{split}
\end{equation}

Now let $c(T)\le k(\frac{\Delta}{T})^{\gamma} \me^{-\Delta/T}$, $\Delta,k>0$, $\gamma\ge 0$.
For $0\le t\le 1/\gamma$, the function $\me^{-1/t}t^{-\gamma}$ is non-decreasing in $t$ such that for $T\le T_c\le \Delta/\gamma$ 
\begin{equation}
\begin{split}
u(T)&=\int_0^T\md t\,c(t)\le k \Delta\int_0^{\frac{T}{\Delta}}\md t\,T^{-\gamma} \me^{-1/T}\le k \Delta(T/\Delta)^{1-\gamma} \me^{-\Delta/T}\le \gamma^{\gamma-1}\left(\frac{\Delta}{\gamma T} \me^{-\frac{\Delta}{\gamma T}}\right)^{\gamma}
\le k\gamma^{\gamma-1}\me^{-\frac{\Delta}{2 T}},
\end{split}
\end{equation}
i.e., for $T\le T_c\le \Delta/\gamma$,
\begin{equation}
\label{intbound}
\begin{split}
\frac{\Delta}{2 T}
&\le \ln\left(k\gamma^{\gamma-1}\right)-\ln(u(T))=\ln\left(k\gamma^{\gamma-1}\right)+\dot g(u(T)),\;\;\; g(u)=u-u\ln(u)
\end{split}
\end{equation}
such that for $T_c\le \Delta/\gamma$
\begin{equation}
\begin{split}
\int_0^{T_c}\md T\,\frac{\dot u(T)}{T}&\le \frac{2}{\Delta}\ln\left(k\gamma^{\gamma-1}\right)u(T_c)+\frac{2}{\Delta}\left( g(u(T_c))-g(u(0))\right)
\\
&=\frac{2}{\Delta}\left(\ln\left(k\gamma^{\gamma-1}\right)+1-\ln(u(T_c))\right)u(T_c)
\end{split}
\end{equation}
and for $T_c> \Delta/\gamma$
\begin{equation}
\begin{split}
\int_0^{T_c}\md T\,\frac{\dot u(T)}{T}&=\int_0^{\frac{\Delta}{\gamma}}\md T\,\frac{\dot u(T)}{T}+\int_{\frac{\Delta}{\gamma}}^{T_c}\md T\,\frac{\dot u(T)}{T}\\
&\le \int_0^{\frac{\Delta}{\gamma}}\md T\,\frac{\dot u(T)}{T}+\frac{\gamma}{\Delta}(u(T_c)-u(\Delta/\gamma))\\
&\le \frac{2}{\Delta}\ln\left(k\gamma^{\gamma-1}\right)u(\Delta/\gamma)+\frac{2}{\Delta} \left[u(\Delta/\gamma)
- u(\Delta/\gamma)\ln(u(\Delta/\gamma))\right]
+\frac{\gamma}{\Delta}(u(T_c)-u(\Delta/\gamma))\\
&=\frac{2\ln(k\gamma^{\gamma-1})-\gamma}{\Delta}
u(\Delta/\gamma)+\frac{\gamma}{\Delta}u(T_c)\\
&\hspace{2cm}
+\frac{2}{\Delta}u(T_c) \left[\frac{u(\Delta/\gamma)}{u(T_c)}
- \frac{u(\Delta/\gamma)}{u(T_c)}\ln\left(\frac{u(\Delta/\gamma)}{u(T_c)}\right)
- \frac{u(\Delta/\gamma)\ln(u(T_c))}{u(T_c)}
\right]
\\
&\le\frac{2}{\Delta}\left(\ln(k\gamma^{\gamma-1})+1-\ln(u(T_c))
\right)
u(\Delta/\gamma)
+\frac{\gamma}{\Delta}u(T_c)-\frac{\gamma}{\Delta}u(\Delta/\gamma)
-\frac{2}{\Delta}u(\Delta/\gamma),
\end{split}
\end{equation}
where we used Eq.~\eqref{intbound} and the monotonicity  of $u(T)$ (recall that we choose $T_c$ the smallest temperature for which .. holds, i.e., below this temperature the energy is non-decreasing) to bound the first integral and the monotonicity of $x-x\ln(x)$ for $0\le x\le 1$ to obtain the last line.
\end{proof}
\section{PEPO approximation}
\begin{prop}
Let $H$ be a Hamiltonian on a $d$-dimensional lattice $\Lambda := [n]^d$. Let $\rho_0$ be the maximally mixed state on the groundspace of $H$. Suppose there are $\Delta, k  >0$ and $\nu \geq 0$ such that $c(T) \leq k T^{-\nu} e^{- \Delta / T}$ for $T \leq 1/\log(n)$. Then for every $1 \geq \delta > 0$ and $T \leq \delta/\log(n)$, 
\begin{equation} \label{boundtracenorm}
\Vert \rho_T - \rho_0 \Vert_1 \leq \eta := \frac{2k}{\Delta} \left( \frac{\log(n)}{\delta} \right)^{\nu-1} n^{d - \frac{\Delta}{\delta}}, 
\end{equation}
and there is a PEPO $\pi$ of bond dimension $(n/\varepsilon)^{O(\log(n))}$ such that $\Vert \pi - \rho_0 \Vert_1 \leq \varepsilon + \eta$.
\end{prop}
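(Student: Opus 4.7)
The plan is to decouple the proposition into two essentially independent pieces, combined by the triangle inequality: a trace-distance bound $\|\rho_T - \rho_0\|_1 \le \eta$ coming purely from the spectral gap and the specific-heat hypothesis, and a PEPO approximation of $\rho_T$ borrowed black-box from \cite{MSVC14}.

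For the trace-distance bound I would exploit the operator inequality $H \ge \Delta(\mathbb{I} - P_0)$, where $P_0$ is the projector onto the (zero-energy) groundspace; this is immediate from the spectral decomposition, since every excited eigenvalue is at least $\Delta$. Taking the expectation in $\rho_T$ gives $\tr(\rho_T(\mathbb{I} - P_0)) \le \tr(\rho_T H)/\Delta = n^d u(T)/\Delta$. A direct diagonal computation in the energy eigenbasis then shows $\|\rho_T - \rho_0\|_1 = 2\,\tr(\rho_T(\mathbb{I} - P_0))$: the ground-state eigenvalues of $\rho_T - \rho_0$ contribute $D(1/D - 1/Z_T) = 1 - D/Z_T$ in absolute value, and the excited-state eigenvalues contribute the same amount. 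Hence $\|\rho_T - \rho_0\|_1 \le 2n^d u(T)/\Delta$.

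To control $u(T) = \int_0^T c(t)\,dt$ I would use that the majorant $k t^{-\nu}e^{-\Delta/t}$ has derivative of log equal to $(\Delta - \nu t)/t^2$, hence is monotonically increasing on $(0,\Delta/\nu)$. For $T \le \delta/\log(n) \le 1/\log(n)$ the entire interval of integration sits inside this monotone regime (once $\log n \ge \nu/\Delta$), so
\begin{equation}
u(T) \le T\cdot c(T) \le k T^{1-\nu}e^{-\Delta/T}.
\end{equation}
Evaluating at $T = \delta/\log(n)$ gives $u(T) \le k(\log(n)/\delta)^{\nu-1} n^{-\Delta/\delta}$, which plugged into the previous inequality reproduces $\eta$ exactly. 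The monotonicity of $u(T)$ then extends the bound automatically to all $T \le \delta/\log(n)$.

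For the second half I would invoke \cite{MSVC14}: for any inverse temperature $\beta$ and tolerance $\varepsilon$, $\rho_T$ on an $n^d$-site lattice admits a PEPO approximant $\pi$ of bond dimension $(n\beta/\varepsilon)^{O(\log n)}$ with $\|\pi - \rho_T\|_1 \le \varepsilon$. Applied at $\beta = \log(n)/\delta$ with $\delta$ a constant, the bond dimension remains $(n/\varepsilon)^{O(\log n)}$, and the triangle inequality $\|\pi - \rho_0\|_1 \le \|\pi - \rho_T\|_1 + \|\rho_T - \rho_0\|_1 \le \varepsilon + \eta$ completes the proof. I do not anticipate any real obstacle here: everything rests on the elementary Markov-type estimate $H \ge \Delta(\mathbb{I} - P_0)$ together with a one-line integration of $c$, with the PEPO step delegated entirely to existing literature. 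The only care-point is verifying that the temperature of interest lies in the monotone regime $t \le \Delta/\nu$ of the integrand, which holds for all sufficiently large $n$.
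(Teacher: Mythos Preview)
Your proposal is correct and follows essentially the same skeleton as the paper: reduce $\|\rho_T-\rho_0\|_1$ to an energy-type quantity via the spectral inequality $E_k\ge\Delta$ for $k>0$, bound that quantity by integrating the specific-heat majorant, and then invoke \cite{MSVC14} plus the triangle inequality for the PEPO part.

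The only organizational difference is that you bound $u(T)=\int_0^T c(t)\,dt$ directly, whereas the paper rewrites $n^d u(T)/T = S(\rho_T)-\log Z_T$, uses $\log Z_T\ge \log D = n^d s(0)$, and then bounds $s(T)-s(0)=\int_0^T c(t)/t\,dt$. Both routes land on the same expression $kT^{1-\nu}e^{-\Delta/T}n^d$ after the monotonicity-of-the-majorant step, so your path is marginally shorter but not genuinely different. One small notational slip: the inequality $u(T)\le T\cdot c(T)$ is not literally true (we have no monotonicity on $c$ itself); what you mean, and what your surrounding text makes clear, is $u(T)\le \int_0^T k t^{-\nu}e^{-\Delta/t}\,dt \le kT^{1-\nu}e^{-\Delta/T}$.
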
 

\begin{proof}
For $T = \delta/\log(n)$ we find
\begin{eqnarray}
s(T) -s(0)=  \int_{0}^{T} \frac{C(T')}{T'}dT' \leq  k \frac{\log^\nu(n)}{\delta^{\nu}n^{ \frac{\Delta}{\delta}}} .
\end{eqnarray}
Eq. (\ref{boundtracenorm}) follows from the previous equation and
\begin{eqnarray}
&&\hspace{-1cm}\frac{1}{Z_T}\sum_{k > 0}e^{- E_k/T}  \leq \frac{T}{\Delta} \frac{1}{Z_T}\left( \sum_{k > 0} \frac{E_k}{T} e^{- E_k/T} \right)  \nonumber \\   
&=&  \frac{T }{\Delta}  (S(\rho_T) - \log(Z_T)) \leq \frac{T n^d}{\Delta}  (s(T) - s(0)).  \nonumber
\end{eqnarray}
Ref. \cite{MSVC14}, which builds on \cite{Has06, KGKRE14}, shows that $\rho_T$ can be approximated with error $\varepsilon$ in trace norm by a PEPO of bond dimension $(n/\varepsilon)^{O(\log(n))}$.
\end{proof}
\end{document}